\newtheorem{theorem}{Theorem}
\newtheorem{lemma}{Lemma}
\newtheorem{proposition}{Proposition}
\theoremstyle{remark}
\newtheorem{remark}{Remark}
\theoremstyle{remark}
\newtheorem{ex}{Example}
\theoremstyle{remark}
\newtheorem{definition}{Definition}
\theoremstyle{theorem}
\newcommand{\E}{{\mathbb{E}}}
\newcommand{\Prob}{{\mathbb{P}}}
\newcommand{\Q}{{\mathbb{Q}}}
\newcommand{\1}{\mathbf{1}}
\newcommand{\F}{\mathcal{F}}
\newcommand{\dd}{\mathrm{d}}
\begin{document}

\title{A Systematic Approach to Constructing Market Models With Arbitrage\thanks{%
This project started at the Sino-French Workshop at the Beijing International Center for Mathematical Research in June 2013.
We are deeply indebted to Ying Jiao, Caroline Hillairet, and Peter Tankov for organising this wonderful meeting and we are very grateful to all participants of the workshop for stimulating discussions on the subject matter of this note. We thank Claudio Fontana and Chau Ngoc Huy for many helpful comments on an earlier version of this note.}}
\author{Johannes Ruf\thanks{%
E-Mail: johannes.ruf@oxford-man.ox.ac.uk} \\
Oxford-Man Institute of Quantitative Finance and Mathematical Institute\\
University of Oxford, United Kingdom \and Wolfgang Runggaldier\thanks{%
E-Mail: runggal@math.unipd.it} \\Department of Mathematics\\University of Padova, Italy}
\maketitle

\begin{abstract}
	This short note provides a systematic construction of market models without unbounded profits but with arbitrage opportunities.
\end{abstract}



\section{Introduction}
One of the fundamental notions in modern mathematical finance is that of absence of arbitrage.  In fact, without no-arbitrage conditions one cannot meaningfully solve problems of pricing, hedging or portfolio optimization. A fundamental step in the historical development of the no-arbitrage theory was made by Harrison, Kreps, and Pliska in a series of papers, see \citet{HarrisonKreps}, \citet{Kreps_1981}, and \citet{HarrisonPliska}.
\cite{DS_fundamental, DS_1998} (see also \citet{Kabanov}) then proved the equivalence between the economic notion of No Free Lunch With Vanishing Risk (NFLVR)  and the mathematical concept of an Equivalent Local Martingale Measure (ELMM) in full generality. 

The more recent Stochastic Portfolio Theory (see, for example, \citet{Fe} or the survey in \cite{FK_survey}), which is a more descriptive rather than normative theory, shows that the behavior in real markets corresponds to weaker notions of no-arbitrage than NFLVR.  Somehow in parallel to this theory, the so-called Benchmark Approach to quantitative finance (see \citet{Pl2006} or the textbook \cite{PH}) was introduced with the aim of showing that pricing and hedging can also be performed without relying on the existence of an ELMM.

Various weaker notions of no-arbitrage have henceforth been studied as well as their consequences on asset pricing and portfolio optimization; for a recent unifying analysis of the whole spectrum of no-arbitrage conditions for continuous financial market models, see \cite{Fontana_2013}. A crucial concept in this development is the notion of an Equivalent Local Martingale Deflator (ELMD), which is the counterpart of the density process for the case when an ELMM exists. Like the density process, the ELMD is a local martingale, but it may fail to be a martingale.  In parallel with this theory, also a theory of asset price bubbles was developed where, under an ELMM, discounted asset prices are strict local martingales but we do not touch this issue here, and refer instead to \citet{JP_complete, JP_incomplete}.

Along the development of the weaker notions of no-arbitrage, a crucial step was made in the paper \cite{KK} where the authors show that a condition, which they call No Unbounded Profit With Bounded Risk (NUPBR), is the minimal condition for which portfolio optimization can be meaningfully performed.  For the corresponding hedging problem, we refer to \cite{Ruf_ots, Ruf_hedging}. The notion of NUPBR has also
 appeared under the name of No Arbitrage of the First Kind (NA1); see \citet{I} and \cite{Kardaras_2012_viability}. 
A related problem of interest, but that we do not deal with here, is that of the robustness of the concept of arbitrage, an attempt in this direction is made in \cite{Guasoni_Rasony_2011}.

As a consequence of the above, the interest arose in finding market models that fall between NFLVR and NUPBR. Such models would then allow for classical arbitrage, but make it still possible to perform pricing and hedging as well as portfolio optimization. A classical example for continuous market models appears already in \cite{DS_Bessel}. The relevance of Bessel processes in this context is also stressed in \cite{PH}; see also the note \citet{Ruf_2010_Bessel} and the survey \cite{Hulley_2010}.  At this point, one might then wonder whether there are other financially significant models, beyond those based on Bessel processes, that satisfy NUPBR but not NFLVR and whether there could be a systematic procedure to generate such models. Equivalently, whether there is a procedure to generate strict local martingales and the present paper is an attempt in this direction. 

Our approach is inspired by a recent renewed interest (see, for example, \citet{FK}, \citet{Ruf_Novikov}, \citet{CFR2011}, \citet{KKN_local}, \citet{Perkowski_Ruf_2013}) in the so-called F\"ollmer exit measure  of a strictly positive local martingale, constructed in \citet{F1972}, as it was already initiated for continuous processes in \cite{DS_Bessel} (see Theorem~1 there).   Another, but related point of view to look at our approach is to interpret the expectation process (as a function of time) of a nonnegative local martingale as the distribution function of a certain random variable, namely the time of explosion of a process that is related to the nonnegative martingale; this point of view is inspired by \citet{McKean_1969} and further explored in \citet{Karatzas_Ruf_2013}.

Our approach is closely related to the method suggested in \citet{OR}, where diverse markets are constructed through an absolutely continuous but not equivalent change of measure.
Parallel to this work, \citet{Protter_2013_strict} has developed an approach via a shrinkage of the underlying filtration to systematically obtain strict local martingales.  Using such an insight to construct strict local martingales might yield a further method to obtain models that satisfy NUPBR, but not NFLVR. We shall not pursue this direction and leave it open for future research.  On the contrary, \citet{Fontana_Jeanblanc_2013} construct models that satisfy NUPBR (at least, up to a certain time), but not NFLVR, via a filtration enlargement.

\section{The model and preliminary notions}

Given a finite time horizon $T<\infty$, consider a market with $d$ assets, namely a pair $(\Omega, \F, (\F(t))_{t \in [0,T]}, \Prob), S$ of a filtered probability space and a $d$--dimensional vector $S$ of nonnegative semimartingales  $(S_i)_{i = 1, \ldots, d}$ with $S_i = (S_i(t))_{t \geq 0}$.  We assume that $\F(0)$ is trivial and that the filtration $(\F(t))_{t \in [0,T]}$ is right-continuous. Each component of the process $S$ represents the price of one of $d$ assets that we assume already discounted with respect to the money market account; that is, we assume the short rate of interest to be zero.  Agents invest in this market according to a self-financing strategy $H=(H(t))_{t \in [0,T]}$ and we shall denote by $$V^{x,H}=(V^{x,H}(t))_{t \in [0,T]} = x+((H\cdot S)_t)_{t \in [0,T]}  =x+\left(\int_0^tH(u) \dd S(u)\right)_{t \in [0,T]}$$  the value process corresponding to the strategy $H$ with initial value $V^{x,H}(0)=x$.  

\begin{definition}\label{D1}  Let $\alpha>0$ be a positive number. An $S-$integrable predictable process $H$ is called {\sl $\alpha$--admissible} if $H(0)=0$ and the process $V^{0,H}$ satisfies $V^{0,H}(t) \geq -\alpha$ for all $t\in[0,T]$ almost surely.
The strategy $H$ is called {\sl admissible} if it is $\alpha$--admissible for some $\alpha>0$. \qed
\end{definition}

\begin{definition}\label{D2} An {\sl arbitrage strategy} $H$ is an admissible strategy for which $\Prob(V^{0,H}(T)\geq 0) =1$ and $\Prob(V^{0,H}(T)> 0) >0$. We call it a {\sl strong arbitrage strategy} if $\Prob(V^{0,H}(T)>0)=1.$ \qed
\end{definition}

We also recall that the classical notion of absence of arbitrage, namely \emph{No Free Lunch with Vanishing Risk (NFLVR)}, is equivalent to the existence of a probability measure $\Q$, equivalent to $\Prob$, under which the price processes are local martingales (as we assume the prices to be nonnegative).
Among the more recent weaker notions of absence of arbitrage we recall the following:

\begin{definition}\label{D3}  An $\F(T)$--measurable random variable $\xi$ is called an \emph{Arbitrage of the First Kind}  if $\Prob(\xi\ge 0)=1$, $\Prob(\xi> 0)>0$, and for all $x>0$ there exists an $x$--admissible strategy $H$ such that $V^{x,H}(T)\ge \xi$. We shall say that the market admits \emph{No Arbitrage of the First Kind (NA1)}, if there is no arbitrage of the first kind in the market. \qed\end{definition}

\begin{definition}\label{D3a} There is \emph{No Unbounded Profit With Bounded Risk (NUPBR)} if the set 
$$\mathcal{K}_1 = \left\{ V^{0,H}(T) \mid H = (H(t))_{t \in [0,T]} \text{ is a 1--admissible strategy for } S\right\}$$
is bounded in $L^0$, that is, if
$$\lim_{c \uparrow \infty} \sup_{W \in \mathcal{K}_1} \Prob(W>c) = 0$$
holds. \qed
\end{definition}

It can be shown that NA1 and NUPBR are equivalent (see \citet{Kardaras_finitely})  and that NFLVR implies NUPBR, but there is no equivalence between the latter two notions (see \citet{DS_fundamental} or \citet{KK}).
A market satisfying NA1 (or, equivalently, NUPBR) is also called (weakly) {\sl viable} and it can be shown that market viability in the sense of NA1 (NUPBR) is a minimal condition to meaningfully solve problems of pricing, hedging and portfolio optimization; see \citet{KK}.

The last notion to be recalled is that of an \emph{Equivalent Local Martingale Deflator (ELMD)}, which generalizes the notion of the density process for an ELMM:

\begin{definition}\label{D4} An \emph{Equivalent Local Martingale Deflator (ELMD)} is a nonnegative local martingale $Z$, not necessarily a martingale, such that $Z(0)=1$ and $\Prob(Z(T)>0)=1$, and the price processes, when multiplied by $Z$, become local martingales.
\qed
\end{definition}

The following result has only recently been proven in full generality; see  \citet{Kardaras_2012_viability}, \citet{Takaoka}, and \citet{Song_2013}:
\begin{proposition}  \label{P1}
 A market satisfies NUPBR if and only if the set of equivalent local martingale deflators is not empty. \end{proposition}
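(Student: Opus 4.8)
The plan is to prove the two implications separately. Sufficiency of an ELMD for NUPBR is the easy direction; essentially all of the difficulty lies in the converse.

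For the \emph{if} direction, suppose an ELMD $Z$ is given. The key observation is a ``num\'eraire-change'' identity: for any $1$--admissible strategy $H$, the deflated value process $Z V^{1,H}$ can be written as a stochastic integral with respect to the local martingales $Z,ZS_1,\dots,ZS_d$ (use integration by parts together with $Z_-\,\dd S_i+\dd[Z,S_i]=\dd(ZS_i)-S_{i,-}\,\dd Z$), hence is a $\sigma$--martingale; since $V^{1,H}=V^{0,H}+1\ge 0$ by $1$--admissibility, the Ansel--Stricker lemma upgrades $Z V^{1,H}$ to a (nonnegative) supermartingale, so that $\E[Z(T)V^{1,H}(T)]\le Z(0)V^{1,H}(0)=1$. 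A Chebyshev-type estimate then finishes the job: given $\varepsilon>0$, pick $a>0$ with $\Prob(Z(T)\le a)\le\varepsilon$, which is possible because $\Prob(Z(T)>0)=1$; then, uniformly over $1$--admissible $H$, $\Prob(V^{0,H}(T)>c)\le\Prob(Z(T)\le a)+\Prob(Z(T)V^{1,H}(T)>a(c+1))\le\varepsilon+1/(a(c+1))$. Letting $c\uparrow\infty$ and then $\varepsilon\downarrow 0$ shows that $\mathcal{K}_1$ is bounded in $L^0$, i.e.\ NUPBR holds.

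For the \emph{only if} direction I would follow one of the two established routes. Route (a), in the spirit of \citet{KK} and \citet{Kardaras_2012_viability}: under NUPBR the convex set $\mathcal{K}_1$ (or a suitable solid hull of it) is not only bounded but also closed in probability, by a Delbaen--Schachermayer-type closedness argument; convexity, boundedness and closedness in $L^0$ then yield a maximal element $\widehat V=V^{1,\widehat H}$, the num\'eraire portfolio, characterised by $\E[V^{1,H}(T)/\widehat V(T)]\le 1$ for every $1$--admissible $H$. The natural candidate deflator is $Z:=1/\widehat V$ (automatically $Z(0)=1$ and $Z(T)>0$ since $\widehat V(T)<\infty$), and one verifies that $Z$ and each $ZS_i$ are local martingales by exploiting the first-order (variational) optimality conditions satisfied by $\widehat H$. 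Route (b), following \citet{Takaoka} (see also \citet{Song_2013}): reduce NUPBR for $S$ to NFLVR for an auxiliary market obtained by adjoining one carefully chosen traded asset, apply the Delbaen--Schachermayer fundamental theorem of asset pricing to obtain an equivalent local martingale measure for the auxiliary market, and transform its density process back into an ELMD for $S$.

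The genuine obstacle is this converse direction, and within it the step of promoting the candidate from a nonnegative supermartingale to an honest local martingale: in route (a) this rests on the $L^0$--closedness of $\mathcal{K}_1$ combined with a careful use of optimality, while in route (b) it is outsourced to the deep Delbaen--Schachermayer theorem, at the price of a somewhat delicate reduction step. A secondary technical point is that $S$ is only assumed to be a nonnegative semimartingale rather than (locally) bounded, so the arguments must be phrased in terms of $\sigma$--martingales and it is convenient to reduce first --- for instance by passing to $S_i/(1+\sum_j S_j)$ or by a preliminary change of num\'eraire --- in order to avoid integrability pathologies.
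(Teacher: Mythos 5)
The paper gives no proof of Proposition~\ref{P1}: it simply quotes the result and cites Kardaras (2012), Takaoka, and Song (2013), which are exactly the sources your ``only if'' direction defers to, so on the hard direction your proposal and the paper coincide in being a pointer to those deep results rather than a proof. Your self-contained argument for the easy direction (existence of an ELMD implies NUPBR), via the $\sigma$--martingale/Ansel--Stricker upgrade of $ZV^{1,H}$ to a nonnegative supermartingale and the Chebyshev-type estimate $\Prob(V^{0,H}(T)>c)\le\Prob(Z(T)\le a)+1/(a(c+1))$, is correct and standard; just note that your converse remains a roadmap (num\'eraire portfolio à la Karatzas--Kardaras, or Takaoka's reduction to NFLVR for an auxiliary market) and not a proof, which is precisely the level of detail the paper itself provides.
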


The goal of this note is now to provide a systematic way to construct a market that satisfies NUPBR, but not NFLVR.  

\section{Main result} \label{S:construction}

In this section we formulate two assumptions under which we can construct a market that satisfies NA1 (equivalently NUPBR) but not NFLVR. In the following Section~\ref{Examples}, we then present some examples for markets in which those assumptions are satisfied, namely for which NUPBR thus holds, but not NFLVR.

Based on \cite{DS_Bessel}, as it is said there, we now turn things upside down. While in the previous section we had started from a probability space $(\Omega, \F, (\F(t))_{t \in [0,T]}, \Prob)$, on which the $d$ asset price processes $S_i$ are semimartingales, now we consider a filtered probability space  $(\Omega, \F, (\F(t))_{t \in [0,T]}, \Q)$, on which the $d$ processes $S_i$ are $\Q-$local martingales. On this same probability space we then consider a further nonnegative $\Q$--martingale $Y = (Y(t))_{t \in[0,T]}$  with $Y(0)=1$. Let the stopping time $\tau$ denote the first hitting time of $0$ by the $\Q$--martingale $Y$. 
We shall assume that $Y$ has positive probability to hit zero, but it only hits zero continuously; to wit,
\begin{align}  \label{eq:A0}
	\Q(Y(T) = 0) = \Q(\tau\leq T) >0 \qquad \text{ and } \qquad \Q(\{Y(\tau-) > 0\} \cap \{\tau \leq T\}) = 0.
\end{align}
Since $Y$ was assumed to be $\Q$--martingale we also have $ \Q(\tau\leq T) <1$.

Since $Y$ is a $\Q$--martingale it generates a probability measure $\Prob$ (it corresponds to the $\Prob$ from the previous section) via the Radon-Nikodym derivative $\dd \Prob/\dd \Q = Y(T)$; the probability measure $\Prob$ is absolutely continuous with respect to $\Q$,  but not equivalent to $\Q$. 
\begin{lemma}\label{L1}  Under Assumption~\eqref{eq:A0}, the process $1/Y$ is a nonnegative $\Prob-$strict local martingale with $\Prob(1/Y(T)>0) = 1$.\end{lemma}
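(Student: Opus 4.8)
The plan is to work with the absolutely continuous (but not equivalent) change of measure $\dd\Prob/\dd\Q=Y(T)$ and to reduce every assertion to a standard density-process (abstract Bayes) argument; the two parts of Assumption~\eqref{eq:A0} play complementary roles.

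First I would establish that $\Prob(\tau>T)=1$. Since $\{Y(T)=0\}\subseteq\{\tau\le T\}$ trivially, the first part of \eqref{eq:A0} forces $\{\tau\le T\}=\{Y(T)=0\}$ up to a $\Q$-null set, whence
$$\Prob(\tau\le T)=\E^{\Q}\big[Y(T)\,\1_{\{\tau\le T\}}\big]=\E^{\Q}\big[Y(T)\,\1_{\{Y(T)=0\}}\big]=0 .$$
Consequently $1/Y$ is a well-defined, finite, strictly positive process on $[0,T]$ under $\Prob$; in particular it is nonnegative, and $\Prob(1/Y(T)>0)=1$ holds simply because $Y(T)$ is $\Prob$-a.s.\ finite. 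Using moreover the classical fact that the nonnegative $\Q$-supermartingale $Y$ is absorbed at $0$ (once $Y$ or its left limit reaches $0$), one obtains that $\Prob$-a.s.\ $Y$ is bounded away from $0$ on $[0,T]$.

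Next, to get the local martingale property under $\Prob$, set $\rho_n:=\inf\{t\ge0:Y(t)<1/n\}$ for $n\in\N$. By the previous step $\Prob$-a.s.\ $\rho_n\wedge T=T$ for all sufficiently large $n$, so $(\rho_n\wedge T)_{n}$ localizes on $[0,T]$ under $\Prob$, and it remains to verify that each $(1/Y)^{\rho_n}=1/Y^{\rho_n}$ is a genuine $\Prob$-martingale. Here the second part of \eqref{eq:A0} is essential: were $Y(\rho_n)=0$ on $\{\rho_n\le T\}$, then necessarily $\rho_n=\tau$ while $Y(\rho_n-)\ge1/n>0$, contradicting \eqref{eq:A0}; hence $Y(\rho_n)>0$ $\Q$-a.s.\ on $\{\rho_n\le T\}$, so $Y^{\rho_n}$ is a strictly positive $\Q$-martingale on $[0,T]$. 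Therefore $\widetilde\Q_n:=Y^{\rho_n}(T)\cdot\Q$ is a probability measure, equivalent to $\Q$ on $\F(T)$ with $\Q$-density process $Y^{\rho_n}$, and by optional sampling $\widetilde\Q_n=\Prob$ on $\F(\rho_n\wedge T)$. As the density process of $\Q$ with respect to $\widetilde\Q_n$, the process $1/Y^{\rho_n}$ is a $\widetilde\Q_n$-martingale; being adapted to $(\F(\rho_n\wedge t))_{t}$ and since $\Prob=\widetilde\Q_n$ on $\F(\rho_n\wedge T)$, it is a $\Prob$-martingale as well.

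Finally, $\E^{\Prob}[1/Y(T)]=\E^{\Q}[Y(T)/Y(T)]=\Q(Y(T)>0)=1-\Q(\tau\le T)<1=1/Y(0)$ by the first part of \eqref{eq:A0}, so the expectation of $1/Y$ at time $T$ is strictly below its initial value; being a nonnegative local martingale, $1/Y$ is a supermartingale, hence it cannot be a true $\Prob$-martingale and is therefore a strict local martingale. I expect the delicate points to be exactly the two positivity statements above: that the localizing times reach $T$ under $\Prob$, which rests on $\Prob(\tau>T)=1$ together with the absorption of $Y$ at $0$, and that the stopped processes $Y^{\rho_n}$ never vanish on $[0,T]$, which is precisely what the ``$Y$ hits $0$ only continuously'' half of \eqref{eq:A0} buys. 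The remaining measure-change manipulations are routine, provided one only claims identities $\Prob$-a.s.\ and keeps in mind that $\Prob$ is merely absolutely continuous with respect to $\Q$.
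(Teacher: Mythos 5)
Your proof is correct and is essentially the argument the paper has in mind: the paper gives no details but delegates the lemma to Theorem~2.1 in \citet{CFR2011}, whose proof is exactly your computation — $\Prob(\tau\le T)=\E^{\Q}[Y(T)\1_{\{Y(T)=0\}}]=0$, localization at the level-crossing times $\rho_n$ (where the ``hits zero only continuously'' part of \eqref{eq:A0} guarantees $Y^{\rho_n}>0$), a Bayes/change-of-measure step for the stopped processes, and strictness from the drop in expectation $\E^{\Prob}[1/Y(T)]=\Q(Y(T)>0)<1=1/Y(0)$. The one step you assert rather than prove — that a process stopped at $\rho_n$ which is a martingale for the stopped filtration is also a martingale for the original filtration — is a standard fact, so there is no gap.
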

This statement of Lemma~\ref{L1}  follows directly from simple computations; see, for example, Theorem~2.1 in \citet{CFR2011}.

We introduce the following basic assumption:  
\begin{equation}
\begin{gathered} \label{eq:A1}
	\text{There exists $x \in (0,1)$ and an admissible strategy $H = (H(t))_{t \in [0,T]}$ s.t. } 
		 V^{x,H}(T) \geq \1_{\{Y(T)>0\}}.  
\end{gathered}
\end{equation}

Note that the market $(\Omega, \F, (\F(t))_{t \in [0,T]}, \Q), S$ of the last subsection satisfies NFLVR.  Thus, Assumption~\eqref{eq:A1} is equivalent to the assumption that the minimal superreplication price of the contingent claim $ \1_{\{Y(T)>0\}}$ is smaller than $1$ in this market; to wit,
\begin{align*}
	\sup_{R \in \mathcal{M}} \E^R[ \1_{\{Y(T)>0\}}]  < 1,
\end{align*}
where $\mathcal{M}$ denotes the set of all probability measures that are equivalent to $\Q$ and under which $S_i$ is a local martingale for each $i = 1, \ldots, d$.

We now state and prove the main result of this note:
\begin{theorem}\label{T1}
Under the setup of this section and under Assumptions~\eqref{eq:A0} and \eqref{eq:A1}, the market $(\Omega, \F, (\F(t))_{t \in [0,T]}, \Prob), S$ satisfies NUPBR but does not satisfy NFLVR. Moreover, any predictable process $H$ that satisfies the condition in Assumption~\ref{eq:A1} is a strong arbitrage strategy in the market $(\Omega, \F, (\F(t))_{t \in [0,T]}, \Prob), S$.
\end{theorem}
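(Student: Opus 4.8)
The plan is to establish the three assertions in turn; the failure of NFLVR comes for free from the strong--arbitrage statement, so the substance is (i) exhibiting an equivalent local martingale deflator under $\Prob$ and (ii) analysing the strategy $H$ of Assumption~\eqref{eq:A1}. Everything hinges on the simple observation that $\Prob$ is carried by $\{Y(T)>0\}$: since $\dd\Prob/\dd\Q=Y(T)$ and $Y(T)\1_{\{Y(T)=0\}}=0$ identically, $\Prob(Y(T)=0)=\E^\Q[Y(T)\1_{\{Y(T)=0\}}]=0$, so $\Prob(Y(T)>0)=1$; and as a nonnegative $\Q$--martingale is absorbed at $0$, this gives $\Prob(\tau>T)=1$ and hence $Y(t)>0$ for all $t\in[0,T]$, $\Prob$--almost surely.

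\emph{NUPBR.} By Proposition~\ref{P1} it suffices to produce one ELMD for the market $(\Omega,\F,(\F(t)),\Prob),S$, and the natural candidate is $Z:=1/Y$. By Lemma~\ref{L1}, $Z$ is a nonnegative $\Prob$--local martingale with $Z(0)=1/Y(0)=1$ and $\Prob(Z(T)>0)=1$, so the only point left is that $ZS_i=S_i/Y$ is a $\Prob$--local martingale for each $i$. This is the very computation that proves Lemma~\ref{L1} (see Theorem~2.1 in \citet{CFR2011}), carried out with the nonnegative $\Q$--local martingale $S_i$ in place of the constant process $1$: since $Y$ is the $\Q$--density process of $\Prob$, the process $S_i/Y$ is a $\Prob$--local martingale precisely because its product with $Y$, namely $S_i$, is a $\Q$--local martingale, the localization combining a reducing sequence for $S_i$ under $\Q$ with the stopping times $\inf\{t:Y(t)\le 1/n\}$, which by the first paragraph exceed $T$, $\Prob$--almost surely, for $n$ large. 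Thus $Z$ is an ELMD under $\Prob$ and NUPBR holds.

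\emph{Strong arbitrage.} Fix $x\in(0,1)$ and an admissible $H$ as in Assumption~\eqref{eq:A1}. Passing from $\Q$ to $\Prob\ll\Q$ keeps $S$ a semimartingale, leaves the stochastic integral $(H\cdot S)$ unchanged $\Prob$--almost surely, and keeps $H$ admissible (its defining a.s.\ bounds persist under $\Prob$); in particular $H(0)=0$. By the first paragraph $\1_{\{Y(T)>0\}}=1$ $\Prob$--almost surely, so Assumption~\eqref{eq:A1} yields $V^{x,H}(T)\ge 1$, whence $V^{0,H}(T)=V^{x,H}(T)-x\ge 1-x>0$, $\Prob$--almost surely. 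Therefore $\Prob(V^{0,H}(T)\ge0)=\Prob(V^{0,H}(T)>0)=1$, i.e. $H$ is a strong arbitrage strategy.

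\emph{Failure of NFLVR, and the main obstacle.} A strong arbitrage strategy rules out NFLVR at once: an ELMM $\Q'\sim\Prob$ would make $V^{0,H}$ a $\Q'$--local martingale bounded below, hence a $\Q'$--supermartingale, forcing $\E^{\Q'}[V^{0,H}(T)]\le V^{0,H}(0)=0$ and contradicting $V^{0,H}(T)\ge 1-x>0$ $\Q'$--almost surely. The one step requiring genuine care is the verification that $S_i/Y$ is a $\Prob$--local martingale --- the change-of-measure and localization bookkeeping when $S_i$ may vanish and $Y$ may approach $0$; everything else is immediate from the set-up, Lemma~\ref{L1} and Proposition~\ref{P1}.
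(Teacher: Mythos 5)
Your argument is correct and follows the same route as the paper's own proof: the strategy $H$ of Assumption~\eqref{eq:A1} is a strong arbitrage under $\Prob$ because $\Prob(\{Y(T)>0\})=1$ and $x<1$, which disposes of NFLVR, and $1/Y$ serves as an ELMD (via Lemma~\ref{L1} together with the generalized Bayes argument that makes $S_i/Y$ a $\Prob$--local martingale), which gives NUPBR by Proposition~\ref{P1}. You merely spell out a few steps the paper leaves implicit --- that $\Prob$ charges only $\{Y(T)>0\}$, the preservation of admissibility and of the stochastic integral under $\Prob\ll\Q$, and the localization bookkeeping behind the Bayes computation.
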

\begin{proof}
	Note that the process $H$ from Assumption~\eqref{eq:A1} is an admissible trading strategy under $\Q$, and thus, under $\Prob$, too.  Since $\Prob(\1_{\{Y(T)>0\}} = 1)=1$ and since $x<1$, the strategy $H$ is a strong arbitrage strategy under $\Prob$. Thus, the market $(\Omega, \F, (\F(t))_{t \in [0,T]}, \Prob), S$  does not satisfy NFLVR.
	
	To see that the market satisfies NUPBR, note that, by Lemma~\ref{L1},  the process $1/Y$ is a $\Prob$--local martingale and that $S_i/Y$ is also a $\Prob$--local martingale for each $i = 1, \ldots, d$  by a generalized Bayes' formula; see, for example, Proposition~2.3(iii) in \citet{CFR2011}. Thus, a local martingale deflator exists and, by Proposition \ref{P1},  the market satisfies NUPBR.
\end{proof}

Basically any market  $(\Omega, \F, (\F(t))_{t \in [0,T]}, \Prob), S$ that satisfies NUPBR but not NFLVR implies the existence of a probability measure $\Q$ and of a $\Q$--local martingale $Y$ that  satisfies \eqref{eq:A0} and such that $\dd \Prob/\dd \Q = Y(T)$;
see \citet{DS_Bessel}, \citet{Ruf_hedging}, and \citet{Imkeller_Perkowski}.
In this sense, Theorem~\ref{T1} provides the reverse direction and therefore may be considered a systematic construction of markets satisfying NUPBR but not NFLVR.

\section{Examples}\label{Examples}
In the setup of the previous section, we now discuss several examples for markets in which Assumptions~\eqref{eq:A0} and \eqref{eq:A1} hold. Theorem~\ref{T1} then proves that all those markets satisfy, under $\Prob$, NUPBR but not NFLVR. 

\begin{ex}
Let $Y$ denote a nonnegative $\Q$--martingale that satisfies $Y(0) = 1$ and Assumption~\eqref{eq:A0}. Let $S_1$ be the right-continuous modification of the process $(E^\Q[1_{\{Y(T)>0\}}|\F(t)])_{t \in [0,T]}$ and let $S_i$ for $i = 2, \ldots, d$ denote any $\Q$--local martingale. Then, clearly Assumption~\eqref{eq:A1} is satisfied and Theorem~\ref{T1} may be applied. \qed
\end{ex}

\begin{ex}  \label{ex:2}
 As a second example, worked out in detail in the dissertation of Chau Ngoc Huy, consider the case when $Y$ is a compensated $\Q$--Poisson process with intensity $\lambda \geq 1/T$ started in one, stopped when hitting zero or when it first jumps. Set $S_1 = Y$ and let $S_i$ for $i = 2, \ldots, d$ denote any $\Q$--local martingale. Clearly, Assumption~\eqref{eq:A0} (with $\Q(Y(T) = 0) = \exp(-1)$) and Assumption~\eqref{eq:A1} (with $x = 1-\exp(-1)$) hold, and thus, Theorem~\ref{T1} applies. \qed
\end{ex}

\begin{ex}
Let us now slightly generalize the previous Poisson setup of Example~\ref{ex:2}. Towards this end, we fix $\lambda \geq 1/T$ and let $F_{\min} \leq 1 \leq F_{\max}$ denote two strictly positive reals such that 
\begin{align*}
	x := \frac{F_{\max}}{F_{\min}} \left(1 - \exp\left(-\frac{1}{F_{\max}}\right)\right) < 1.
\end{align*}
Furthermore, we choose an arbitrary distribution function $F$ with support $[F_{\min}, F_{\max}]$ and expectation $1$.  Now, we let $Y$ denote a compensated compound Poisson process with intensity $\lambda$ and jump distribution $F$ (under the probability measure $\Q$), started in one and stopped when hitting zero or at its first jump.   As before, we set $S_1 = Y$ and let $S_i$ for $i = 2, \ldots, d$ denote any $\Q$--local martingale, without making any further assumptions on them.
Again, Assumption~\eqref{eq:A0} is clearly satisfied. 

As the nonnegative $\Q$--local martingale $Y$ might have jumps of different sizes, the process $S$ does not necessarily have the predictable martingale representation property.
However, we may check Assumption~\eqref{eq:A1} by hand. To make headway, we fix $\tau$ as the first hitting time of zero by the compensated compound Poisson process $Y$ and let $\rho$ denote its first jump time.  Note that $\tau \wedge \rho \leq 1/\lambda \leq T$ since at time $1/\lambda$ the compound Poisson process $Y$ has either made a jump or hit zero.  We define the $S$--integrable predictable process $H = (H_1, \ldots, H_d)$ by $H_i \equiv 0$ for all $i = 2, \ldots, d$, and, 
\begin{align*}
	H_1(t) = \frac{\exp\left(-\frac{1 - \lambda t}{F_{\max}}\right)}{F_{\min}} \1_{\{t  \leq \rho \wedge \tau\}}
\end{align*}
for all $t \in [0,T]$. Then, with $x$ as defined above, we obtain
\begin{align*}
	V^{x,H}(T) &= V^{x,H}(\tau \wedge \rho)  = x + \int_0^{\tau \wedge \rho}\frac{\exp\left(-\frac{1-\lambda t}{F_{\max}}\right)}{F_{\min}}\dd Y(t) \\
		&=  x - \lambda  \int_0^{\tau \wedge \rho}\frac{\exp\left(-\frac{1 - \lambda t}{F_{\max}}\right)}{F_{\min}} \dd t+  \frac{\exp\left(-\frac{ 1 - \lambda \rho}{F_{\max}}\right)}{F_{\min}} \Delta Y(\rho) \1_{\{\rho \leq \tau\}} \\
		&\geq \frac{F_{\max}}{F_{\min}} \left(1 - \exp\left(-\frac{1-\lambda(\tau \wedge \rho)}{F_{\max}}\right) \right)+  \exp\left(-\frac{1-\lambda {\rho}}{F_{\max}}\right) \1_{\{\rho \leq \tau\}} \\
		&= \left(\frac{F_{\max}}{F_{\min}} \left(1 - \exp\left(-\frac{1- \lambda \rho}{F_{\max}}\right) \right)+  \exp\left(-\frac{1- \lambda {\rho}}{F_{\max}}\right) \right)\1_{\{\rho \leq \tau\}} \\
		&\geq \1_{\{\rho \leq \tau\}} = \1_{\{Y(T)>0\}},
\end{align*}
where the first inequality follows from the observation that $\Delta Y(\rho)/ F_{\min} \1_{\{\rho \leq \tau\}} \geq  \1_{\{\rho \leq \tau\}}$, the equality just afterwards follows from the observation that $ \lambda \tau = 1$ on the event $\{\tau \leq \rho\}$, and the last inequality follows from the two facts that $F_{\max}/F_{\min} \geq 1$ and that the inequality $\lambda \rho \leq 1$ holds on the event $\{\rho \leq \tau\}$.
Thus, Assumption~\eqref{eq:A1} is satisfied and Theorem~\ref{T1} may be applied. \qed
\end{ex}

\begin{remark} \label{R:complete}
 Note that Assumption~\eqref{eq:A1} is always satisfied if the process $S$ has the predictable martingale representation property under $\Q$.   To wit,  Assumption~\eqref{eq:A1} is always satisfied if the market $(\Omega, \F, (\F(t))_{t \in [0,T]}, \Q), S$ is complete.
\qed
\end{remark}

\begin{ex} Consider the filtration $(\F(t))_{t \in [0,T]}$ to be generated by a $d$-dimensional $\Q$--Brownian motion $B = (B_i)_{i = 1, \ldots, d}$ with $B_i = (B_i(t))_{t \in [0,T]}$.  Let $\sigma = (\sigma(t))_{t \in [0,T]}$ denote a progressively measurable matrix-valued process of dimension  $d \times d$ such that $\sigma(t)$ is $\Q$--almost surely invertible for $Lebesgue$-almost every $t \in [0,T]$ and let $S(\cdot) = \int_0^\cdot \sigma(t) \dd B(t)$. Assume that the process $\sigma$ is chosen so that the price process $S$ is strictly positive.  Let $Y$ be a nonnegative local martingale hitting zero with positive probability, for example, the process $1 + B_1$ stopped when hitting zero.
Then,  the process $S$ has the predictable martingale representation property under $\Q$, and the construction of Section~\ref{S:construction} generates a market that satisfies NUPBR but not NFLVR; see Remark~\ref{R:complete}.  We also refer to Theorem~3 in \citet{DS_Bessel}, where a similar setup is discussed.\qed
\end{ex}

\bibliographystyle{apalike}
\setlength{\bibsep}{1pt}
\bibliography{aa_bib}

\end{document}